\documentclass[a4paper,pra,twocolumn,groupedaddress]{revtex4}

\usepackage{ae}
\usepackage{bm}
\usepackage{physics}
\usepackage{anyfontsize}
\usepackage[T1]{fontenc}
\usepackage{mathptmx} 
\usepackage[cal=cm]{mathalfa}
\usepackage[ansinew]{inputenc}
\usepackage{amsmath}
\usepackage{amsthm}
\usepackage{amssymb}
\usepackage{stmaryrd}
\usepackage{amsmath,amscd,amsfonts,amssymb, color,bbm, bm,
	braket, xcolor, setspace, cancel, stmaryrd, calc, amsthm} 
\usepackage[caption=false]{subfig}
\usepackage{multirow}
\usepackage{array}
\usepackage[]{graphicx}
\usepackage{wrapfig}
\usepackage{makecell}
\usepackage{xparse}
\usepackage{dcolumn}
\usepackage{color}
\usepackage[colorlinks=true, linkcolor=red, citecolor=blue, urlcolor=blue]{hyperref}
\usepackage{lscape}
\newtheorem{theorem}{Theorem}
\graphicspath{ {./images1/} }

\newtheorem{prop}{Proposition}
\newtheorem{defin}{Definition}

\begin{document}
	\title{Hierarchy of hidden nonlocality: A genuine activation of Incompletability}
	\author{Soumajit Das}
	\email{soumajitd01@gmail.com}
	\affiliation{Physics and Applied Mathematics Unit, Indian Statistical Institute,  203 B.T. Road, Kolkata 700108, India}
	\author{Shampa Mondal}
	\email{shampamondal2451@gmail.com}
	\affiliation{Physics and Applied Mathematics Unit, Indian Statistical Institute,  203 B.T. Road, Kolkata 700108, India}
	\author{Preeti Parashar}
	\email{parashar@isical.ac.in}
	\affiliation{Physics and Applied Mathematics Unit, Indian Statistical Institute,  203 B.T. Road, Kolkata 700108, India}
	\author{Atanu Bhunia}
	\email{atanu.bhunia31@gmail.com}
	\affiliation{Physics and Applied Mathematics Unit, Indian Statistical Institute,  203 B.T. Road, Kolkata 700108, India}
\begin{abstract}
Quantum nonlocality admits several operational manifestations, one of which emerges from sets of orthogonal quantum states that cannot be perfectly distinguished by local operations and classical communication (LOCC). Such sets are regarded as nonlocal because their perfect discrimination requires global measurements. In contrast, sets that are perfectly distinguishable by LOCC are generally considered locally accessible and operationally classical.
In this work, we investigate the role of incompletability in local state discrimination and introduce the notion of \emph{activation of incompletability}. Specifically, we demonstrate the existence of orthogonal sets that are initially perfectly distinguishable by LOCC and free from local redundancy, but which can be transformed via LOCC into strictly incompletable sets. We prove that activation of incompletability necessarily implies activation of nonlocality, whereas the converse fails in general, thereby establishing a hierarchy between the two activation phenomena. Furthermore, within the framework of local incoherent operations and classical communication (LICC), we show that any set whose incompletability can be activated can nevertheless be extended to a complete orthonormal basis of the Hilbert space, although the resulting completed basis is no longer perfectly distinguishable by LOCC. Our results uncover a fundamental interplay among local distinguishability, incompletability, coherence, and nonlocality, and provide new insight into the structure of locally accessible quantum information.
\end{abstract}
	
	\date{\today}
	\pacs{}
	\maketitle
	\section{Introduction}
	\par Quantum systems exhibit several forms of nonlocality that are distinct from the well-known Bell nonlocality \cite{Bell nonlocality}. When a set of orthogonal quantum states cannot be perfectly distinguished by local operations and classical communication (LOCC), it reflects a fundamental nonlocal feature of quantum physics, \cite{BennettPB1999}. Local distinguishability of quantum states refers to the task of identifying a state from a set of prespecified orthogonal states shared among parties separated by arbitrary distances and LOCC being the only allowed class of operations \cite{BennettUPB1999,Walgate2000,Virmani,Ghosh2001,Groisman,Walgate2002,Divincinzo,Horodecki2003,Fan2004,Ghosh2004,Nathanson2005,Watrous2005,Niset2006,Ye2007,Fan2007,Runyo2007,somsubhro2009,Feng2009,Runyo2010,Yu2012,Yang2013,Zhang2014,somsubhro2010,yu2014,somsubhro2014,somsubhro2016,bennett1996,popescu2001,xin2008,somsubhro2009(1)}. The nonlocality of orthogonal quantum states can be used for various practical purposes such as data hiding \cite{terhal58,divincenzo580,lamidatahiding,terhaldatahiding,chaves2020,wehner2020,winterdatahiding,haydendatahiding}, quantum secret sharing \cite{rahaman330,markham309,wang320}, and similar applications. Consequently, 
considerable attention has been paid to the study of local distinguishability of orthogonal quantum states and the exploration of the relationship between quantum nonlocality and entanglement \cite{Zhang2015,Wang2015,Chen2015,Yang2015,Zhang2016,Xu2016(2),Zhang2016(1),Xu2016(1),Halder2019strong nonlocality,Halder2019peres set,Xzhang2017,Xu2017,Wang2017,Cohen2008,somsubhro2018,zhang2018,Halder2018,Yuan2020,Rout2019,bhunia2020,bhunia2023,biswas2023,Zhang2019,bhunia2022}.

\par The operational framework underlying local distinguishability is precisely the class of LOCC. In a typical multipartite setting, the subsystems are distributed among distant parties who perform local measurements on their respective systems while communicating their outcomes classically. Based on the communicated information, subsequent local measurements are adaptively chosen, and the protocol proceeds either for a finite number of rounds or until the desired task is accomplished.

\par This class of operations is known as LOCC. From both practical and foundational perspectives, LOCC plays a central role in quantum information theory. Experimentally, local measurements are easier to implement and require fewer resources than global operations on composite systems. Fundamentally, LOCC is intimately connected to entanglement theory, since entanglement characterizes precisely those multipartite correlations that cannot be generated through LOCC alone \cite{quantum entanglement}. Despite its broad significance, however, the structure of LOCC remains far from completely understood.
	\par Local distinguishability of quantum states provides a fundamental framework for understanding the operational limitations of LOCC. In a seminal work, Walgate \textit{et al.}~\cite{Walgate2000} showed that any two orthogonal multipartite pure states can always be perfectly distinguished by LOCC. However, this property no longer holds in general for larger sets of orthogonal states. Indeed, the existence of locally indistinguishable orthogonal states reveals a distinctive manifestation of quantum nonlocality.

\par Since entanglement is deeply connected to nonlocal phenomena, it is natural to expect that sets of mutually orthogonal product states should always be perfectly distinguishable by LOCC. Surprisingly, this intuition turns out to be incorrect. Bennett \textit{et al.}~\cite{BennettPB1999} first demonstrated this by constructing a set of nine orthogonal product states in $\mathbb{C}^3 \otimes \mathbb{C}^3$ that cannot be perfectly distinguished by LOCC, thereby introducing the phenomenon of ``nonlocality without entanglement.'' This remarkable result established that entanglement is not a necessary resource for local indistinguishability and showed that the absence of entanglement alone does not guarantee local accessibility of information. Since then, numerous examples of locally indistinguishable product states and related structures have been discovered \cite{BennettPB1999,BennettUPB1999,Zhang2015,Wang2015,Chen2015,Yang2015,Zhang2016,Xu2016(2),Zhang2016(1),Xu2016(1),bhunia2023,biswas2023,Halder2019strong nonlocality,Halder2019peres set,Xzhang2017,Xu2017,Wang2017,Cohen2008,Zhang2019,somsubhro2018,zhang2018,Halder2018,Yuan2020,Rout2019,bhunia2020,bhunia2022,bhuniaubb2024,indra2025,subrata2024,bhunia2025}.
\par A particularly important restricted subclass of LOCC is given by local incoherent operations and classical communication (LICC) \cite{streltsov2015,streltsov2017,chakraborty2019}, where each local operation is required to be incoherent with respect to a fixed reference basis. Since incoherent operations cannot create quantum coherence from incoherent states \cite{T2014,W2016,S2017}, the LICC framework naturally incorporates coherence-theoretic constraints into local state discrimination protocols. Quantum coherence, being a basis-dependent manifestation of superposition, is now recognized as a fundamental quantum resource with wide-ranging applications in quantum information processing. Consequently, studying local distinguishability under LICC provides a natural platform for exploring the interplay among coherence, entanglement, and quantum nonlocality \cite{A2021}.
\par A particularly important class of such sets is the unextendible product basis (UPB), which provides an incomplete orthogonal product basis exhibiting nonlocality without entanglement. A UPB is a set of mutually orthogonal product states whose complementary subspace contains no product state \cite{Divincinzo}. Equivalently, the set cannot be extended to a complete product basis while preserving orthogonality. More generally, a set of orthogonal product states is said to be \emph{incompletable} if it cannot be completed to a full orthogonal product basis of the underlying Hilbert space, while a set that admits no such completion even in any enlarged Hilbert space is referred to as \emph{strongly incompletable}. It is known that UPBs cannot be perfectly distinguished by LOCC \cite{LOCC}, and the normalized projector onto their orthogonal complement gives rise to bound entangled states \cite{BennettUPB1999,Divincinzo}. Consequently, UPBs establish a profound connection between local indistinguishability, incompletability, and entanglement, making them central objects in quantum information theory.

    \par These observations naturally raise a broader question concerning the behavior of completable orthogonal sets under basis extension. Consider two orthogonal sets of pure states, $\mathcal{S}_A$ and $\mathcal{S}_B$ in the same Hilbert space. Suppose that $\mathcal{S}_A$ contains at least one entangled state together with several product states, whereas $\mathcal{S}_B$ consists entirely of product states. Assume further that both sets are LOCC distinguishable and completable to orthonormal bases of the full Hilbert space. By the seminal result of Horodecki \textit{et al.}\cite{Horodecki2003}, the completed basis extending $\mathcal{S}_A$ is necessarily locally indistinguishable, since every orthonormal basis containing at least one entangled state cannot be perfectly distinguished by LOCC. In contrast, what can be concluded about the LOCC distinguishability of a completion of $\mathcal{S}_B $ consisting entirely of product states?  Motivated by this  question, we investigate here the role of incompletability in local state discrimination and introduce the notion of \emph{activation of incompletability}. In particular, we demonstrate that there exist sets of orthogonal states which are initially perfectly distinguishable by LOCC and free from local redundancy \cite{bandhyopadhyay201}, yet can be transformed through LOCC into strictly incompletable sets. We show that activation of incompletability constitutes a strictly stronger phenomenon than activation of nonlocality \cite{bandhyopadhyay201,GhoshStrongActivation2022,Li2022,subrata2024,bera2026,bhunia2026}: every set exhibiting activation of incompletability necessarily exhibits activation of nonlocality, whereas the converse does not hold in general. This establishes a nontrivial hierarchy between the two activation phenomena.
    
\par Furthermore, by restricting the allowed operations to LICC, we uncover a fundamental connection among incompletability, coherence, and local distinguishability. Specifically, we prove that any set whose incompletability can be activated under LICC can be extended to a complete orthonormal basis of the underlying Hilbert space, although the resulting completed basis fails to remain perfectly distinguishable by LOCC. Our results, therefore, provide a new operational perspective on the interplay between coherence and nonlocality and shed further light on the structure of locally accessible quantum information.
The remainder of the paper is organized as follows. In Sec.~\ref{A1}, we introduce the necessary definitions and preliminary concepts related to local distinguishability, incompletability, coherence, and activation phenomena. Explicit examples illustrating activation of nonlocality are presented in Sec.~\ref{A2}. We define in Sec.~\ref{A3}, activation of incompletability and analyze its relation to activation of nonlocality in both bipartite and multipartite settings. In Sec.~\ref{A4}, we investigate the role of LICC and establish the fundamental connections among incompletability, coherence, and local distinguishability. Finally, we summarize our results and discuss possible future directions in Sec.~\ref{A5}.	
	\section{Preliminaries}
	\label{A1}
	A measurement on a $d$-dimensional quantum system can be expressed as a set of positive operator-valued measure (POVM) elements $\left\{M_k\right\}_k$. These elements are the positive semidefinite Hermitian matrices that satisfy the completeness relation $\sum_k M_k=\mathrm{I}_d$, where $\mathrm{I}_d$ is the identity matrix of order $d$. In this section, we will first review some of the definitions which are used throughout the following sections.\\
	\begin{defin}
		\cite{Walgate2002, Halder2018} If all the POVM elements of a measurement structure, corresponding to a discrimination task of a given set of states, are proportional to the identity matrix, then such a measurement is not useful to extract information for this task and is called a $trivial\;measurement$. On the other hand, if not all the POVM elements of a measurement
		are proportional to the identity matrix, then the measurement is said to be a \emph{nontrivial measurement}.
	\end{defin}
	\begin{defin} \cite{Walgate2002, Halder2018} Consider a local measurement to distinguish a fixed set of pairwise orthogonal quantum states. After performing that measurement, if the post-measurement states remain	pairwise orthogonal to each other then such a measurement is said to be an $orthogonality-preserving\; local\;measurement$ (OPLM).
	\end{defin}
	In this work, we shall always stick to OPLM. 
	\begin{defin}
		\cite{Halder2019strong nonlocality} A set of orthogonal quantum states is $locally\; irreducible$ if it is not possible to eliminate one or more quantum states from the set by nontrivial OPLM.
	\end{defin}
	\begin{defin} 
		A set of orthogonal
		quantum states is $locally\; indistinguishable$  if it is possible to eliminate one or more states from the set by OPLM, but it is not possible to distinguish the whole set completely by nontrivial OPLM.
	\end{defin}
\begin{figure}[t]
		\centering
        \hspace*{-1cm}
		\includegraphics[width=1.2\linewidth]{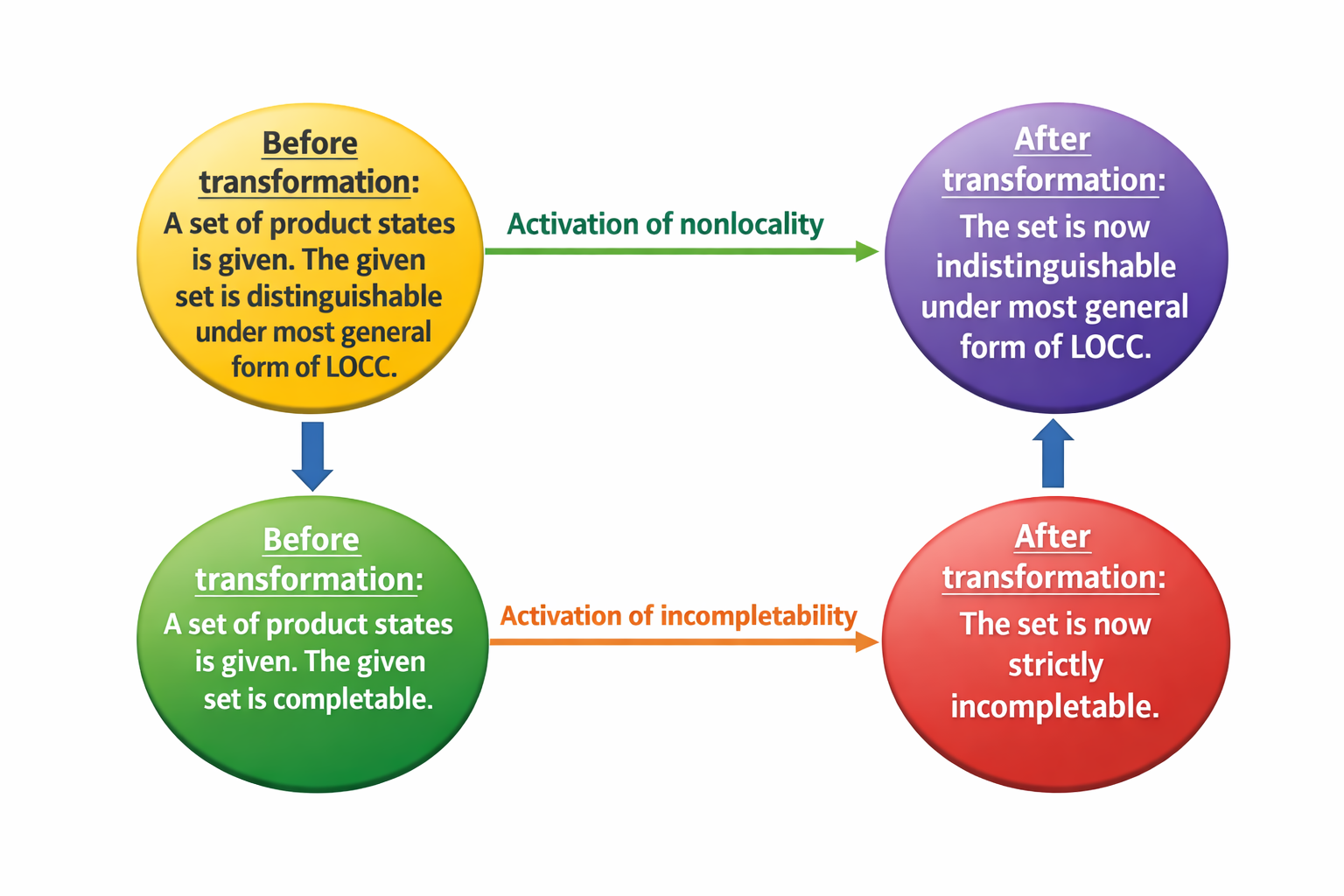}
		\caption{Schematic illustration of the proposed transformation. A LOCC distinguishable, completable set of product state is transformed into a strictly incompletable, LOCC-indistinguishable set, simultaneously activating incompletability and nonlocality.}
		\label{fig1}
	\end{figure}
    
	\par Therefore, it is by definition implied that all locally irreducible states are locally indistinguishable, but the converse is not true.   
	\begin{defin}
    Consider a multipartite set of mutually orthogonal quantum states $S=\left\{|\psi_i\rangle\right\}\subseteq \mathcal{H}_1\otimes\mathcal{H}_2\otimes\cdots\otimes\mathcal{H}_n$. The set S is said to possess local redundancy if there exists a proper subsystem $(A\subsetneq {1,2,\ldots,n})$ such that, after discarding the complementary subsystem $\bar A$, the resulting reduced states remain mutually orthogonal. 

If no such subsystem exists, i.e., if the pairwise orthogonality of the states is destroyed under every nontrivial local reduction, then the set S is said to be locally irredundant (or free from local redundancy).
    \end{defin}
	\begin{defin}\cite{bandhyopadhyay201}
		A locally distinguishable set $\mathcal{S}$ of multipartite orthogonal states is said to be nonlocality activable if it can be transformed to a set of locally indistinguishable orthogonal states via OPLM.    
	\end{defin} 

    \begin{defin}
		A locally distinguishable set $\mathcal{S}$ of multipartite orthogonal states is said to be incompletability activable if it can be transformed to a set of locally incompletable orthogonal states via OPLM.    
	\end{defin} 
	\par As all incompletable sets of orthogonal states are indistinguishable by LOCC \cite{BennettUPB1999},  therefore, it is by definition implied that all incompletability activable sets are nonlocality activable, but the converse is not always true.
	\section{Activation of Nonlocality}
	\label{A2}
    In this section, we present explicit constructions illustrating the phenomenon of activation of nonlocality. We begin with a bipartite example that serves as a prototype for the activation mechanism under local measurements and classical communication. Let $\{|0\rangle, \mid 1\}, \ldots,|d-1\rangle\}$ be an orthonormal basis in $\mathbb{C}^d$, where $d \geqslant 2$. Then, rank-2 and rank-3 projection operators (projectors) are defined as
	
	$$
	\begin{aligned}
		P_{i j} & =|i\rangle(i|+| j)\langle j|,\quad i \neq j, \\
		P_{i j k} & =|i\rangle\langle i|+|j\rangle\langle j|+|k\rangle\langle k|,\quad i \neq j \neq k,
	\end{aligned}
	$$
	
	respectively, where $i, j, k \in\{0,1, \ldots, d-1\}$. For example, $P_{\mathrm{01}}=|0\rangle\langle 0|+|1\rangle\langle 1|$ and $P_{012}=|0\rangle\langle 0|+|1\rangle\langle 1|+|2\rangle\langle 2|$. Let us assume a bipartite system in $\mathbb{C}^2 \otimes \mathbb{C}^4$, where Alice possesses a qubit and Bob possesses two qubits. For notational convenience, we relabel Bob's computational basis as $|00\rangle \equiv |\mathbf{0}\rangle,|01\rangle \equiv|\mathbf{1}\rangle,|10\rangle \equiv|\mathbf{2}\rangle$, and $|11\rangle \equiv|\mathbf{3}\rangle$. Consider the bipartite set $\mathcal{S}_1 =\left\{\left|\xi_i\right\rangle_{A B}\right\}_{i=1}^{3}\in \mathbf{\mathbf{C}^2 \otimes \mathbf{C}^4}$ \cite{bandhyopadhyay201}, where
	
	\begin{multline}
	\begin{aligned}
		& \left|\xi_1\right\rangle =|0\mathbf{0}\rangle+|0\mathbf{2}\rangle+|1\mathbf{1}\rangle-|1\mathbf{3}\rangle, \\
		& \left|\xi_2\right\rangle =|0\mathbf{0}\rangle-|0\mathbf{2}\rangle-|1\mathbf{1}\rangle-|1\mathbf{3}\rangle, \\
		& \left|\xi_3\right\rangle =|0\mathbf{1}\rangle-|1\mathbf{2}\rangle-|1\mathbf{0}\rangle-|0\mathbf{3}\rangle .
	\end{aligned} 
    \label{1}
	\end{multline}
	It is straightforward to verify that the set is free from local redundancy. In particular, not all pairs remain orthogonal if we discard any of Bob's qubits. To show that the states in (\ref{1}) are locally distinguishable, Alice first  performs a measurement on her qubit in the $ \{\mid 0\rangle,|1\rangle\}$ basis and tells Bob the result. Now, each of Alice's outcome results in a set of three orthogonal states for Bob to distinguish. If Alice gets  \lq\lq0\rq\rq, Bob distinguishes the states $|\mathbf{0}\rangle \pm|\mathbf{2}\rangle$ and $|\mathbf{1}\rangle-|\mathbf{3}\rangle$, and if Alice gets \lq\lq1\rq\rq, Bob distinguishes $|\mathbf{1}\rangle \mp|\mathbf{3}\rangle$ and $|\mathbf{0}\rangle+|\mathbf{2}\rangle$.
	\begin{prop}\cite{bandhyopadhyay201}
		The set $\mathcal{S}_1$ is a locally distinguishable set and can be transformed deterministically to a locally indistinguishable set via orthogonality-preserving LOCC.
	\end{prop}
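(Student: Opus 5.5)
The local distinguishability of $\mathcal{S}_1$ has already been established by the explicit protocol given just before the statement: Alice measures in $\{|0\rangle,|1\rangle\}$, and Bob then distinguishes three orthogonal states. So the plan concentrates on the second claim. I will exhibit one orthogonality-preserving local measurement whose every outcome leaves a locally indistinguishable set.

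\emph{Step 1: choice of measurement.} Bob measures his first qubit in the computational basis. In the notation introduced above this is the two-outcome projective measurement $\{P_{\mathbf{0}\mathbf{1}},P_{\mathbf{2}\mathbf{3}}\}$ on $\mathbb{C}^4$. It is nontrivial, since neither element is proportional to the identity.

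\emph{Step 2: post-measurement states.} Applying $\mathrm{I}\otimes P_{\mathbf{0}\mathbf{1}}$ to (\ref{1}) gives, up to normalization,
\[
|0\mathbf{0}\rangle+|1\mathbf{1}\rangle,\quad |0\mathbf{0}\rangle-|1\mathbf{1}\rangle,\quad |0\mathbf{1}\rangle-|1\mathbf{0}\rangle .
\]
Applying $\mathrm{I}\otimes P_{\mathbf{2}\mathbf{3}}$ gives
\[
|0\mathbf{2}\rangle-|1\mathbf{3}\rangle,\quad -(|0\mathbf{2}\rangle+|1\mathbf{3}\rangle),\quad -(|0\mathbf{3}\rangle+|1\mathbf{2}\rangle).
\]
Under the local relabeling $\mathbf{2}\mapsto 0$, $\mathbf{3}\mapsto 1$ on Bob's remaining qubit, and ignoring global phases, the second triple becomes three Bell states of $\mathbb{C}^2\otimes\mathbb{C}^2$. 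The first triple is already three Bell states. In both branches the three states are pairwise orthogonal, so the measurement is an OPLM. Both outcomes occur with nonzero probability for every state of the set, and each outcome yields a triple of Bell states, so the transformation is deterministic.

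\emph{Step 3: local indistinguishability of the outputs.} I will invoke the known result that any three Bell states in $\mathbb{C}^2\otimes\mathbb{C}^2$ cannot be perfectly distinguished by LOCC \cite{Ghosh2001}. Equivalently, one may use the fact that at most $d$ maximally entangled states in $\mathbb{C}^d\otimes\mathbb{C}^d$ can be perfectly distinguished by LOCC \cite{Nathanson2005}; here $3>2$.

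To see that the output sets are locally indistinguishable in the sense of the definition above, not merely locally irreducible, I also need that some OPLM can eliminate a state. Measuring both qubits in the computational basis does this: the outcomes separate $|01\rangle-|10\rangle$ from the other two states, which then remain orthogonal. Hence each post-measurement set is locally indistinguishable, which completes the argument.

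\emph{Main obstacle.} Steps 1 and 2 are routine projections, so the only delicate point is Step 3. That step rests on the Bell-state indistinguishability theorem, which I take from the literature rather than reprove. If a self-contained argument were wanted, I would use the standard entanglement-monotonicity bound: perfect LOCC discrimination of $N$ equiprobable maximally entangled states of $\mathbb{C}^2\otimes\mathbb{C}^2$ would distill $\log N$ ebits from a state whose distillable entanglement is at most $1$. This is impossible for $N=3$.
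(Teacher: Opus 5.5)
Steps 1 and 2, and the first half of Step 3, are exactly the paper's proof. Bob measures $\{P_{\mathbf{01}},P_{\mathbf{23}}\}$, each outcome leaves a triple of Bell states (your post-measurement vectors are correct), and indistinguishability is taken from the three-Bell-state theorem.

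The problem is the paragraph you add to show that ``some OPLM can eliminate a state''. Measuring both qubits in the computational basis is \emph{not} orthogonality-preserving. On outcome $|00\rangle$ (or $|11\rangle$), the states $|00\rangle+|11\rangle$ and $|00\rangle-|11\rangle$ both collapse to the same vector, so the two survivors are identical rather than orthogonal.

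No cleverer measurement fixes this. Write any three Bell states as $|\Phi_i\rangle=(U_i\otimes I)|\Phi^{+}\rangle$ with $U_i$ distinct Paulis. A one-party POVM element $E\otimes I$ preserves orthogonality only if $\langle\Phi_i|(E\otimes I)|\Phi_j\rangle=\tfrac12\mathrm{tr}(E\,U_jU_i^{\dagger})=0$ for all $i\neq j$. The products $U_jU_i^{\dagger}$ run through $X,Y,Z$ up to phase, so $E\propto I$. The same holds on Bob's side, because $(I\otimes E)|\Phi^{+}\rangle=(E^{T}\otimes I)|\Phi^{+}\rangle$. Hence each Bell triple admits no nontrivial OPLM at all: it is locally irreducible, and no state can be eliminated.

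So under the literal wording of Definition 4, which you were trying to satisfy, your claim is false. Worse, the Bell triples would not even qualify as ``locally indistinguishable'' under that wording. It cannot be what is meant. The paper says right afterwards that every locally irreducible set is locally indistinguishable, and its own proof infers indistinguishability directly from the three-Bell-state result.

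Read ``locally indistinguishable'' as ``not perfectly distinguishable by LOCC''. Then delete the elimination paragraph, or replace it with the irreducibility computation above, which gives a stronger conclusion. With that change your argument coincides with the paper's.

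A minor point: your fallback entanglement sketch has the bookkeeping wrong. The standard version runs as follows. LOCC discrimination would extract one ebit from $\rho=\tfrac13\sum_{i=1}^{3}P[\Phi_i]_{AB}\otimes P[\Phi_i]_{A'B'}$ across the cut $AA'|BB'$. But $E_D(\rho)\le S(\rho\|\sigma)=\log_2(4/3)<1$, where $\sigma$ is the corresponding four-term (Smolin) mixture, which is separable across that cut.
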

    \begin{proof}
	To prove this, let us first choose Bob to perform a binary measurement defined by the orthogonal projectors $P_{01}$ and $P_{23}$ and inform Alice of the outcome. If Bob applies $P_{01}$ they are left with one of $|0\mathbf{0}\rangle \pm|1\mathbf{1}\rangle$ and $|0\mathbf{1}\rangle-|1\mathbf{0}\rangle$. Or, if Bob applies $P_{23}$ they are left with one of $|0\mathbf{2}\rangle \mp|1\mathbf{3}\rangle$ and $|1\mathbf{2}\rangle+|0\mathbf{3}\rangle$. So, in each case, they are left with one of three mutually orthogonal pure entangled states that can be embedded in a $\mathbb{C}^2 \otimes \mathbb{C}^2$ space. But as we know, three Bell states in $\mathbb{C}^2 \otimes \mathbb{C}^2$ are always indistinguishable, so,  each set is locally indistinguishable \cite{Ghosh2001,Walgate2002}. Therefore, the states in (\ref{1}) can always be locally converted into another set of three orthogonal states that cannot be locally distinguished. This completes the proof.
     \end{proof}
    The above phenomenon is referred to as \emph{activation of nonlocality}. Operationally, one starts with a set of orthogonal states that is perfectly distinguishable by LOCC and, through orthogonality-preserving local measurements assisted by classical communication, transforms it into a set that is locally indistinguishable, and hence nonlocal \cite{BennettPB1999,Walgate2000}. This notion of activated nonlocality was introduced in \cite{bandhyopadhyay201}.
	\section{Activation of Incompletability}
    \label{A3}
	In this section, we investigate activation of incompletability and establish its relation to activation of nonlocality. In particular, we show that activation of incompletability constitutes a strictly stronger form of activation phenomenon, thereby inducing a hierarchy within locally distinguishable multipartite sets. To illustrate this structure explicitly, we begin with the following bipartite set $\mathcal{S}_2 =\left\{\left|\phi_i\right\rangle_{A B}\right\}_{i=1}^{10}\in \mathbf{\mathbf{C}^6 \otimes \mathbf{C}^6}$, where
	\begin{figure}[t]
		\centering
		\includegraphics[width=1.0\linewidth]{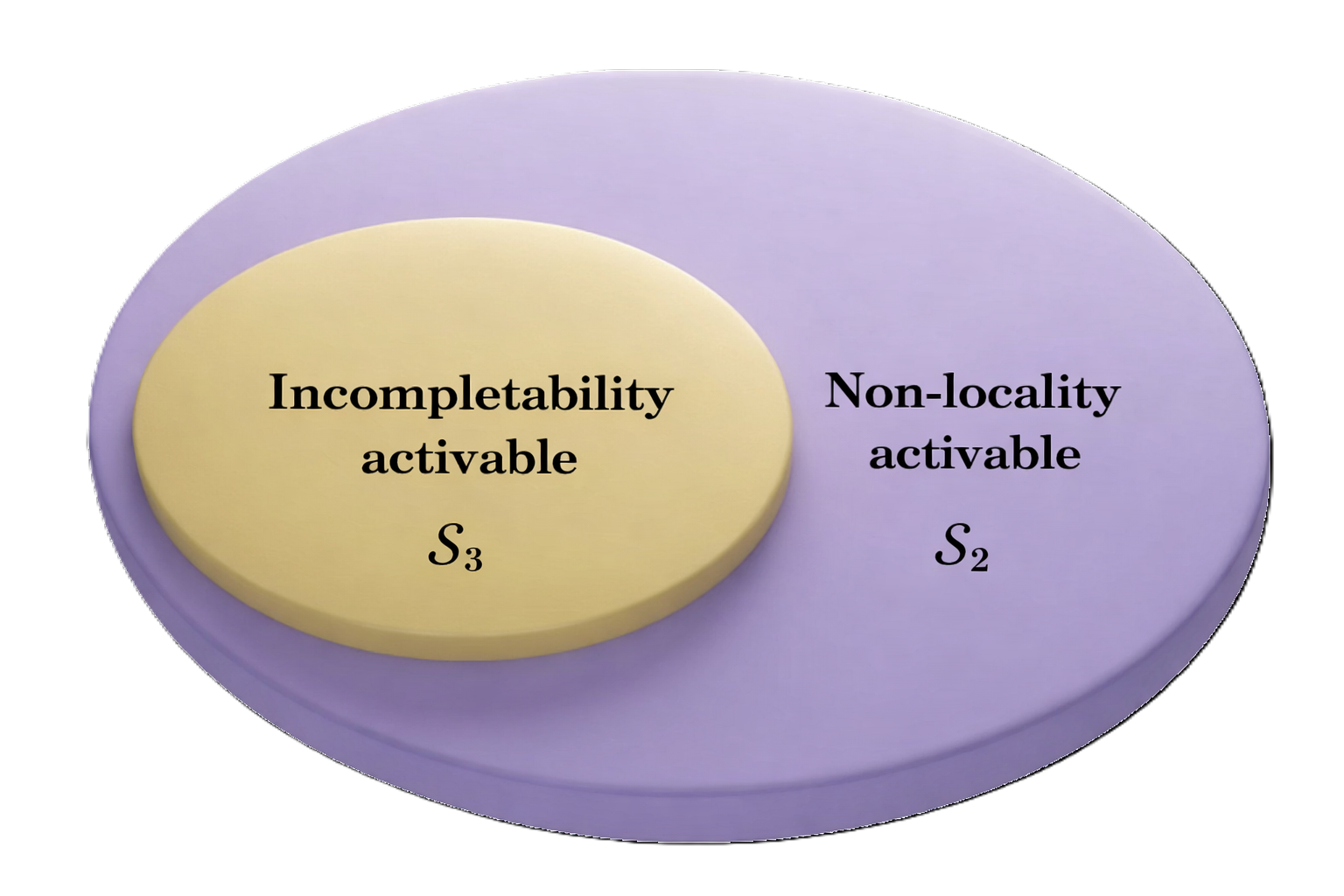}
		\caption{Hierarchy between activation of nonlocality and activation of incompletability. Every incompletability activable set is nonlocality activable, while the converse does not hold in general.}
		\label{fig1}
	\end{figure}
	\begin{multline}
		\begin{aligned}
			& \left|\phi_1\right\rangle_{A B}=|\mathbf{0}\rangle_A|\mathbf{0}-\mathbf{1}+\mathbf{4}-\mathbf{5}\rangle_B \\
			& \left|\phi_2\right\rangle_{A B}=|\mathbf{2}\rangle_A|\mathbf{1}-\mathbf{2}+\mathbf{5}-\mathbf{3}\rangle_B \\
			& \left|\phi_3\right\rangle_{A B}=|\mathbf{1-2}\rangle_A|\mathbf{0}-\mathbf{4}\rangle_B \\
			& \left|\phi_4\right\rangle_{A B}=|\mathbf{0-1}\rangle_A|\mathbf{2}-\mathbf{3}\rangle_B \\
			& \left|\phi_5\right\rangle_{A B}=|\mathbf{0+1+2}\rangle_A|\mathbf{0}+\mathbf{1}+\mathbf{2}+\mathbf{3}+\mathbf{4}+\mathbf{5}\rangle_B \\
			& \left|\phi_6\right\rangle_{A B}=|\mathbf{3}\rangle_A|\mathbf{0}-\mathbf{1}+\mathbf{4}-\mathbf{5}\rangle_B \\
			& \left|\phi_7\right\rangle_{A B}=|\mathbf{5}\rangle_A|\mathbf{1}-\mathbf{2}+\mathbf{5}-\mathbf{3}\rangle_B \\
			& \left|\phi_8\right\rangle_{A B}=|\mathbf{4-5}\rangle_A|\mathbf{0}-\mathbf{4}\rangle_B \\
			& \left|\phi_9\right\rangle_{A B}=|\mathbf{3-4}\rangle_A|\mathbf{2}-\mathbf{3}\rangle_B \\
			& \left|\phi_{10}\right\rangle_{A B}=|\mathbf{3+4+5}\rangle_A|\mathbf{0}+\mathbf{1}+\mathbf{2}+\mathbf{3}+\mathbf{4}+\mathbf{5}\rangle_B
		\end{aligned}
		\label{2}
	\end{multline}
	 The set $\mathcal{S}_2$ considered above can be seen to be free from local redundancy \cite{bandhyopadhyay201,Li2022,subrata2024}. Here, both Alice's and Bob's system can be considered to be the composition of qubit and qutrit subsystems. Precisely, $|\mathbf{0}\rangle:=|00\rangle,|\mathbf{1}\rangle:=|01\rangle,|\mathbf{2}\rangle:=$ $|02\rangle,|\mathbf{3}\rangle:=|10\rangle,\left|\mathbf{4}\right\rangle:=|11\rangle,|\mathbf{5}\rangle:=|12\rangle$. Take two states, $\left|\phi_3\right\rangle_{A B}$ and $\left|\phi_4\right\rangle_{A B}$. When any of the subparts (qubit or qutrit) of Bob's system for both states is discarded the reduced states will be nonorthogonal. An analogous argument applies to Alice's subsystem. This implies the set $\mathcal{S}_2$ is free from local redundancy.
	\par Now we will show that the set $\mathcal{S}_2$ is locally distinguishable. The players can avail the following discrimination protocol. First Bob performs a measurement $\mathcal{M}_B \equiv\left\{\mathcal{M}_B^1:=\right.$ $P\left[|\mathbf{0}-\mathbf{4}\rangle_B\right], \mathcal{M}_B^2 :=P\left[|\mathbf{2}-\mathbf{3}\rangle_B\right], \mathcal{M}_B^3:=P[\mid \mathbf{0}+\mathbf{1}+\mathbf{2}+\mathbf{3}+\mathbf{4}$ $\left.\left.+\mathbf{5}\rangle_B\right], \mathcal{M}_B^4:=\mathbb{I}-\left(\mathcal{M}_B^1+\mathcal{M}_B^2+\mathcal{M}_B^3\right)\right\}$. Here, $P\left[|\cdot\rangle\right]:=$ $|\cdot\rangle\langle\cdot|_{\mathcal{P}}$, and $\mathcal{P}$ denotes the party. When $\mathcal{M}_B^1$ clicks, the given state must be $\left|\phi_3\right\rangle$ and $\left|\phi_8\right\rangle$, which can be distinguished by Alice, projecting onto $|\mathbf{1-2}\rangle$ and $|\mathbf{4-5}\rangle$. Similarly, for the click $\mathcal{M}_B^2$, the states are $\left|\phi_4\right\rangle$ and $\left|\phi_9\right\rangle$, which can be distinguished by Alice, projecting onto $|\mathbf{0-1}\rangle$ and $|\mathbf{3-4}\rangle$. Also for the outcome $\mathcal{M}_B^3$ the isolated states are $\left|\phi_5\right\rangle$ and $\left|\phi_{10}\right\rangle$, which can be distinguished by Alice, projecting onto $|\mathbf{0+1+2}\rangle$ and $|\mathbf{3+4+5}\rangle$. Whenever $\mathcal{M}_B^4$ clicks the given state can be $\left|\phi_1\right\rangle$, $\left|\phi_2\right\rangle$, $\left|\phi_6\right\rangle$ and $\left|\phi_7\right\rangle$. However, in that case, Alice can perform a measurement $\mathcal{M}_A \equiv\left\{\mathcal{M}_A^1:=\right.$ $P\left[|\mathbf{0}\rangle_A\right], \mathcal{M}_A^2 :=P\left[|\mathbf{2}\rangle_A\right], \mathcal{M}_A^3:=P[|\mathbf{3}\rangle_A$ $\left], \mathcal{M}_A^4:=\mathbb{I}-\left(\mathcal{M}_A^1+\mathcal{M}_A^2+\mathcal{M}_A^3\right)\right\}$  to distinguish between these four states. This concludes the local discrimination protocol for the set $\mathcal{S}_2$.
    
The set in Eq.~(\ref{2}) consists entirely of product states in $\mathbf{\mathbf{C}^6 \otimes \mathbf{C}^6}$ and is therefore completable to a full orthonormal product basis \cite{BennettUPB1999}. We now show that this set can be transformed, via OPLM, into a strictly incompletable set. We refer to this phenomenon as \emph{activation of incompletability}. 

More precisely, activation of incompletability describes the process in which an initially completable and locally distinguishable set of orthogonal states is converted, under some LOCC, into a strictly incompletable set. In the following, we present an explicit protocol that realises such an activation for the set in Eq.~(\ref{2}).
    
	\begin{prop}
		The set $\mathcal{S}_2$ is initially a completable set and can be transformed deterministically to a strictly incompletable set via OPLM.    
	\end{prop}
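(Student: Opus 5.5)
The plan is to exploit the qubit--qutrit decomposition of the two local systems, $\mathbb{C}^6\cong\mathbb{C}^2\otimes\mathbb{C}^3$. The first half of the statement, completability, needs no new argument: $\mathcal{S}_2$ consists of mutually orthogonal product states that are LOCC distinguishable, and as noted above such a set completes to a full orthonormal product basis \cite{BennettUPB1999}. For the activation, I will have Bob measure his \emph{qubit} in the computational basis, i.e.\ apply the projectors $P_{\mathbf{012}}$ and $P_{\mathbf{345}}$, and announce the outcome.

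\textbf{Outcome $P_{\mathbf{012}}$.} Bob's vectors reduce as follows:
\begin{itemize}
\item $|\mathbf{0}-\mathbf{1}+\mathbf{4}-\mathbf{5}\rangle\mapsto|0-1\rangle$;
\item $|\mathbf{1}-\mathbf{2}+\mathbf{5}-\mathbf{3}\rangle\mapsto|1-2\rangle$;
\item $|\mathbf{0}-\mathbf{4}\rangle\mapsto|0\rangle$;
\item $|\mathbf{2}-\mathbf{3}\rangle\mapsto|2\rangle$;
\item the uniform vector $\mapsto|0+1+2\rangle$.
\end{itemize}
Here $|0\rangle,|1\rangle,|2\rangle$ denote the basis of Bob's qutrit. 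Alice's vectors for $|\phi_1\rangle,\dots,|\phi_5\rangle$ are $|0\rangle_q\otimes\{|0\rangle,|2\rangle,|1-2\rangle,|0-1\rangle,|0+1+2\rangle\}$. For $|\phi_6\rangle,\dots,|\phi_{10}\rangle$ they are the same qutrit vectors tensored with $|1\rangle_q$.

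Hence the post-measurement set is $\{|k\rangle_q\otimes|\alpha_j\rangle|\beta_j\rangle : k=0,1,\ j=1,\dots,5\}$ in $\mathbb{C}^6\otimes\mathbb{C}^3$. The pairs $\{|\alpha_j\beta_j\rangle\}$ are exactly the five-state Tiles UPB of \cite{BennettUPB1999}, up to relabelling of basis vectors.

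\textbf{Outcome $P_{\mathbf{345}}$.} The reductions are $|1-2\rangle,|2-0\rangle,|1\rangle,|0\rangle,|0+1+2\rangle$. After a local permutation of Bob's qutrit basis this again gives the Tiles structure paired with the same Alice vectors. I will check this explicitly, including the signs. The check for this outcome is where I expect the real work: one must confirm that the pairing of Alice's and Bob's tiles again forms a UPB and not merely an orthogonal set. If the pairing were to fail, the fallback is to have Alice additionally measure her qubit, which cannot hurt.

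Next I will verify that both branches are orthogonality preserving. Pairs with different Alice-qubit labels remain orthogonal trivially. Pairs with the same label are orthogonal because the Tiles states are. The transformation is therefore deterministic, since every outcome yields a set of the required type.

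Finally, the unextendibility argument. Suppose a product vector $(|0\rangle_q|a_0\rangle+|1\rangle_q|a_1\rangle)\otimes|b\rangle$ is orthogonal to all ten post-measurement states. Then $|a_0\rangle|b\rangle$ and $|a_1\rangle|b\rangle$ are each orthogonal to the whole Tiles UPB, so both vanish, and the product vector is zero. Thus the resulting set is a UPB of $\mathbb{C}^6\otimes\mathbb{C}^3$, and in particular it cannot be completed to a product basis. To obtain strict incompletability (no completion even after enlarging the local spaces), I will reuse the same reduction: every branch contains a copy of the Tiles UPB on a $\mathbb{C}^3\otimes\mathbb{C}^3$ block. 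I will then invoke the known strong incompletability of the Tiles set, equivalently its nonzero-dimensional complement containing no product state.
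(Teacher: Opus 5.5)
Your proof is correct, but it uses a shorter protocol than the paper's.

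\textbf{How the two routes differ.} The paper has Bob measure his qubit and then has Alice measure hers with $\mathcal{K}_A=\{P_{\mathbf{012}},P_{\mathbf{345}}\}$. Each of the four resulting branches is then literally the five-state Tiles set in $\mathbb{C}^3\otimes\mathbb{C}^3$, and the UPB property can simply be quoted from \cite{BennettUPB1999}. You stop after Bob's measurement and prove directly that $\{|0\rangle_q,|1\rangle_q\}\otimes\mathrm{Tiles}$ is a UPB of $\mathbb{C}^6\otimes\mathbb{C}^3$, using the split $|a\rangle=|0\rangle_q|a_0\rangle+|1\rangle_q|a_1\rangle$. This costs one small extra lemma. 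In return you get a one-way, single-round activation in which none of the ten states is eliminated. The paper's second measurement, by contrast, already tells the parties whether they hold one of $|\phi_1\rangle$--$|\phi_5\rangle$ or one of $|\phi_6\rangle$--$|\phi_{10}\rangle$.

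\textbf{The check you flagged for the $P_{\mathbf{345}}$ branch.} It is immediate. Apply the cyclic relabelling $|1\rangle\to|0\rangle$, $|2\rangle\to|1\rangle$, $|0\rangle\to|2\rangle$ to Bob's qutrit. This maps
\[
\{|0\rangle|1-2\rangle,\ |2\rangle|2-0\rangle,\ |1-2\rangle|1\rangle,\ |0-1\rangle|0\rangle,\ |0+1+2\rangle|0+1+2\rangle\}
\]
exactly onto Tiles; the minus signs are only per-state phases. The paper relies on the same relabelling implicitly when it says the $\mathcal{K}^B_2$ branches ``form the same UPB''. Your stated fallback would not have rescued a failed pairing, since Alice's qubit measurement leaves the qutrit pairing unchanged. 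It is not needed, though, because the pairing works.

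\textbf{Strong incompletability.} Your inheritance argument is sound: a completion of the ten states in an enlarged space would also complete the embedded Tiles subset there. More directly, any UPB of its support space is strongly uncompletable. Compressing the completing product vectors back onto that space gives product vectors orthogonal to the UPB, so they must vanish. That would force the UPB's span to be the whole space, a contradiction.
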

	\begin{proof} 
		Consider that Bob performs a local measurement $\mathcal{K}_B \equiv\left\{\mathcal{K}^B_1:=P\left[(|\mathbf{0}\rangle,|\mathbf{1}\rangle,|\mathbf{2}\rangle)_B\right], \mathcal{K}^B_2:=\right.$ $\left.P\left[(|\mathbf{3}\rangle,|\mathbf{4}\rangle,|\mathbf{5}\rangle)_B\right]\right\}$, $P\left[(|i\rangle,|j\rangle,\dots)_{\mathcal{P}}\right] = \left[(|i\rangle\langle i|+|j\rangle\langle j|+\dots)_{\mathcal{P}}\right]$, $\mathcal{P}$ stands for party. If $\mathcal{K}^B_1$ clicks, they end up with
		
		$$
		\left\{\begin{array}{c}
			|\mathbf{0}\rangle_A|\mathbf{0}-\mathbf{1}\rangle_B,|\mathbf{2}\rangle_A|\mathbf{1}-\mathbf{2}\rangle_B, \\
			|\mathbf{1-2}\rangle_A|\mathbf{0}\rangle_B,|\mathbf{0-1}\rangle_A|\mathbf{2}\rangle_B, \\
			|\mathbf{0+1+2}\rangle_A|\mathbf{0}+\mathbf{1}+\mathbf{2}\rangle_B,\\
			|\mathbf{3}\rangle_A|\mathbf{0}-\mathbf{1}\rangle_B,|\mathbf{5}\rangle_A|\mathbf{1}-\mathbf{2}\rangle_B, \\
			|\mathbf{4-5}\rangle_A|\mathbf{0}\rangle_B,|\mathbf{3-4}\rangle_A|\mathbf{2}\rangle_B, \\
			|\mathbf{3+4+5}\rangle_A|\mathbf{0}+\mathbf{1}+\mathbf{2}\rangle_B
		\end{array}\right\}
		$$
		After that Alice makes measurement  $\mathcal{K}_A \equiv\left\{\mathcal{K}^A_1:=P\left[(|\mathbf{0}\rangle,|\mathbf{1}\rangle,|\mathbf{2}\rangle)_A\right], \mathcal{K}^A_2:=\right.$ $\left.P\left[(|\mathbf{3}\rangle,|\mathbf{4}\rangle,|\mathbf{5}\rangle)_A\right]\right\}$. If $\mathcal{K}^A_1$ occurs, the states reduce to
		\[
		\left\{\begin{array}{c}
			|\mathbf{0}\rangle_A|\mathbf{0}-\mathbf{1}\rangle_B,|\mathbf{2}\rangle_A|\mathbf{1}-\mathbf{2}\rangle_B, \\
			|\mathbf{1-2}\rangle_A|\mathbf{0}\rangle_B,|\mathbf{0-1}\rangle_A|\mathbf{2}\rangle_B, \\
			|\mathbf{0+1+2}\rangle_A|\mathbf{0}+\mathbf{1}+\mathbf{2}\rangle_B
		\end{array}\right\}
		\]
		which is a strictly incompletable set \cite{BennettUPB1999}. If $\mathcal{K}^A_2$ occurs, the resulting set is also strictly incompletable \cite{BennettUPB1999}
		\[ 
		\left\{\begin{array}{c}
			|\mathbf{3}\rangle_A|\mathbf{0}-\mathbf{1}\rangle_B,|\mathbf{5}\rangle_A|\mathbf{1}-\mathbf{2}\rangle_B, \\
			|\mathbf{4-5}\rangle_A|\mathbf{0}\rangle_B,|\mathbf{3-4}\rangle_A|\mathbf{2}\rangle_B, \\
			|\mathbf{3+4+5}\rangle_A|\mathbf{0}+\mathbf{1}+\mathbf{2}\rangle_B
		\end{array}\right\}
		\]
		On the other hand, if Bob gets $\mathcal{K}^B_2$, they are then left with the following states
		\[	\left\{\begin{array}{c}
			|\mathbf{0}\rangle_A|\mathbf{4}-\mathbf{5}\rangle_B,|\mathbf{2}\rangle_A|\mathbf{5}-\mathbf{3}\rangle_B, \\
			|\mathbf{1-2}\rangle_A|\mathbf{4}\rangle_B,|\mathbf{0-1}\rangle_A|\mathbf{3}\rangle_B, \\
			|\mathbf{0+1+2}\rangle_A|\mathbf{3}+\mathbf{4}+\mathbf{5}\rangle_B,\\
			|\mathbf{3}\rangle_A|\mathbf{4}-\mathbf{5}\rangle_B,|\mathbf{5}\rangle_A|\mathbf{5}-\mathbf{3}\rangle_B, \\
			|\mathbf{4-5}\rangle_A|\mathbf{4}\rangle_B,|\mathbf{3-4}\rangle_A|\mathbf{3}\rangle_B, \\
			|\mathbf{3+4+5}\rangle_A|\mathbf{3}+\mathbf{4}+\mathbf{5}\rangle_B
		\end{array}\right\}
		\]
		After that Alice makes measurement $\mathcal{K}^A_1$ or $\mathcal{K}^A_2$.  If $\mathcal{K}^A_1$ clicks, it gives
		\[ 
		\left\{\begin{array}{c}
			|\mathbf{0}\rangle_A|\mathbf{4}-\mathbf{5}\rangle_B,|\mathbf{2}\rangle_A|\mathbf{5}-\mathbf{3}\rangle_B, \\
			|\mathbf{1-2}\rangle_A|\mathbf{4}\rangle_B,|\mathbf{0-1}\rangle_A|\mathbf{3}\rangle_B, \\
			|\mathbf{0+1+2}\rangle_A|\mathbf{3}+\mathbf{4}+\mathbf{5}\rangle_B
		\end{array}\right\}
		\]
		which is a strictly incompletable set. If $\mathcal{K}^A_2$ clicks, it also gives a strictly incompletable set \cite{BennettUPB1999}
		\[ 
		\left\{\begin{array}{c}
			|\mathbf{3}\rangle_A|\mathbf{4}-\mathbf{5}\rangle_B,|\mathbf{5}\rangle_A|\mathbf{5}-\mathbf{3}\rangle_B, \\
			|\mathbf{4-5}\rangle_A|\mathbf{4}\rangle_B,|\mathbf{3-4}\rangle_A|\mathbf{3}\rangle_B, \\
			|\mathbf{3+4+5}\rangle_A|\mathbf{3}+\mathbf{4}+\mathbf{5}\rangle_B
		\end{array}\right\}
		\]
		It is clear that the five post-measurement states corresponding to each outcome of Alice's measurement $\mathcal{K}^A_1,\;\mathcal{K}^A_2$ when $\mathcal{K}^B_1$ clicks form the celebrated unextendable product basis (UPB) \cite{BennettPB1999,BennettUPB1999} in $\mathbf{C^3 \otimes C^3}$. Also, the post-measurement states for each case of Alice's measurement $\mathcal{K}^A_1,\;\mathcal{K}^A_2$ when $\mathcal{K}^B_2$ clicks form the same UPB. It has been well established that a UPB is locally indistinguishable and cannot be completed, even in an extended Hilbert space \cite{Divincinzo,BennettUPB1999}. So, the set $\mathcal{S}_2$ is incompletability activable by some LOCC. Hence, this completes the proof.
	\end{proof}

    Activation of nonlocality \cite{bandhyopadhyay201} refers to the phenomenon where an initially locally distinguishable set of orthogonal states, free from local redundancy, is transformed via LOCC into a locally indistinguishable set \cite{BennettPB1999,Ghosh2001}. Activation of incompletability constitutes a stronger form of this phenomenon: an initially locally distinguishable and hence completable set \cite{BennettUPB1999}, also free from local redundancy \cite{bandhyopadhyay201}, is converted under LOCC into a strictly incompletable set. Since strict incompletability necessarily implies local indistinguishability, every instance of activation of incompletability also realizes activation of nonlocality. However, the converse does not hold in general. In this sense, activation of incompletability establishes a hierarchy within activation of nonlocality.\\

    The preceding discussion suggests that activation of incompletability is intrinsically a stronger operational phenomenon than activation of nonlocality. We now make this intuition precise by establishing the fundamental implication between the two notions. The following theorem provides a rigorous characterization of their relationship and forms the basis of the hierarchy depicted in Fig.1

\begin{theorem} Any set of orthogonal product states exhibiting activation of incompletability necessarily exhibits activation of nonlocality. However, the converse implication does not generally hold.
\begin{proof}
Let $S \subset \mathcal{H}$ be a locally distinguishable set, free from local redundancy, that is incompletability activable. Then there exists an orthogonality-preserving LOCC measurement with an outcome $\lambda$ such that the post-measurement set \[
	\mathcal{S}_\lambda = \left\{ \frac{M_\lambda^A \otimes M_\lambda^B \ket{\psi_i}}{\sqrt{p_\lambda}} \right\}
	\]
    is strictly incompletable. We claim $S_\lambda$ is locally indistinguishable, which establishes nonlocality-activation of $S$.
Suppose, for contradiction, that $S_\lambda$ is locally distinguishable. By a fundamental result of Bennett et al.\cite{BennettUPB1999}, any set of orthogonal product states that is perfectly distinguishable by LOCC is necessarily completable to a full orthonormal product basis. However, strict incompletability of $S_\lambda$ means its orthogonal complement $S_\lambda^\perp$ contains no product state, so no such completion exists \textemdash a contradiction. Therefore, $S_\lambda$ is locally indistinguishable and $S$ is nonlocality activable.\\

The implication established in Theorem 1 is therefore strict. To demonstrate that the converse fails, we now construct an explicit counterexample consisting of a locally distinguishable and completable set whose activation leads to nonlocality but not to incompletability.
\end{proof}
\end{theorem}

 In the following, we provide an example which demonstrates this separation explicitly. Consider the orthogonal set $\mathcal{S}_3$ that contains the following nine bipartite product states ${ }\left|\psi_i^{ \pm}\right\rangle, i \in \{1,2,3,4\}$ and $\left|\psi_5\right\rangle$ in $\mathbb{C}^6 \otimes \mathbb{C}^6$.
	\begin{multline}
	\begin{aligned}
		&\left|\psi_1^{ \pm}\right\rangle_{AB}=|\mathbf{0}-\mathbf{3}\rangle_A|\mathbf{0}\pm\mathbf{1}+\mathbf{4}\pm\mathbf{5} \rangle_B, \\
		&\left|\psi_2^{ \pm}\right\rangle_{AB}=|\mathbf{2}-\mathbf{5}\rangle_A|\mathbf{1}\pm\mathbf{2}+\mathbf{5}\pm\mathbf{3} \rangle_B, \\
		&\left|\psi_3^{ \pm}\right\rangle_{AB}=|\mathbf{1}\pm\mathbf{2}+\mathbf{4}\pm\mathbf{5} \rangle_A|\mathbf{0}-\mathbf{4}\rangle_B, \\
		&\left|\psi_4^{ \pm}\right\rangle_{AB}=|\mathbf{0}\pm\mathbf{1}+\mathbf{3}\pm\mathbf{4} \rangle_A|\mathbf{2}-\mathbf{3}\rangle_B, \\
		&\left|\psi_5\right\rangle_{AB}=|\mathbf{1}-\mathbf{4}\rangle_A|\mathbf{1}-\mathbf{5}\rangle_B.
	\end{aligned}
	\end{multline}
	It can be checked that the set $\mathcal{S}_3$ is not locally redundant and is distinguishable under LOCC. Hence, according to Bennett \textit{et al.,}\cite{BennettUPB1999} such a set is completable. Each party locally holds a $\mathbb{C}^2 \otimes \mathbb{C}^3$ system. For notational simplicity, we identify the local basis states as $|\mathbf{0}\rangle:=\mid 00\rangle,|\mathbf{1}\rangle:= |01\rangle,|\mathbf{2}\rangle:=|02\rangle,|\mathbf{3}\rangle:=|10\rangle,|\mathbf{4}\rangle:=|11\rangle,|\mathbf{5}\rangle:=|12\rangle$. We now show that the set $\mathcal{S}_3$ is free from local redundancy. If we consider the states $\left|\psi_1^{ +}\right\rangle_{AB}$ and $\left|\psi_1^{ -}\right\rangle_{AB},$  we clearly notice that the remaining parts of the states are non-orthogonal after discarding Bob's qubit. We can demonstrate similar results while tracing out Alice's qubit and qutrit, or Bob's qutrit part. Hence, the set $\mathcal{S}_3$ possesses no local redundancy.
    
    Moreover, the set $\mathcal{S}_3$ can be distinguished with the help of LOCC alone. The players can avail the following discrimination protocol. First Bob performs a measurement $\mathcal{M}_B \equiv\left\{\mathcal{M}_B^1:=\right.$ $P\left[|\mathbf{0}-\mathbf{4}\rangle_B\right], \mathcal{M}_B^2 :=P\left[|\mathbf{2}-\mathbf{3}\rangle_B\right], \mathcal{M}_B^3:=P[\mid\mathbf{1}-\mathbf{5}\rangle_B]$, $\mathcal{M}_B^4:=\mathbb{I}-\left(\mathcal{M}_B^1+\mathcal{M}_B^2+\mathcal{M}_B^3\right)\}$. Here, $P\left[|\cdot\rangle\right]:=$ $|\cdot\rangle\langle\cdot|_{\mathcal{P}}$, and $\mathcal{P}$ denotes the party. When $\mathcal{M}_B^1$ clicks, the given state must be $\left|\psi_3^{ +}\right\rangle_{AB}$ or $\left|\psi_3^{ -}\right\rangle_{AB}$, which can be distinguished by Alice \cite{Walgate2000}. When $\mathcal{M}_B^2$ clicks, the given state must be $\left|\psi_4^{ +}\right\rangle_{AB}$ or $\left|\psi_4^{ -}\right\rangle_{AB}$, which can be distinguished by Alice again. Also the outcome $\mathcal{M}_B^3$ isolates $\left|\psi_5\right\rangle_{AB}$. Whenever $\mathcal{M}_B^4$ clicks the given state can be $\left|\psi_1^{ +}\right\rangle_{AB}$, $\left|\psi_1^{ -}\right\rangle_{AB},$ $\left|\psi_2^{ +}\right\rangle_{AB}$ or $\left|\psi_2^{ -}\right\rangle_{AB}$. However, in that case Alice can perform a measurement $\mathcal{M}_A \equiv\left\{\mathcal{M}_A^1:=\right.$ $P\left[|\mathbf{0}\rangle_A +\left[|\mathbf{3}\rangle_A\right],\mathcal{M}_A^2:=\right.$ $P\left[|\mathbf{2}\rangle_A +\left[|\mathbf{5}\rangle_A\right] \right\}$  to distinguish between these four states. Here $\mathcal{M}_A^1$ isolates the states $\left|\psi_1^{ +}\right\rangle_{AB}$ and $\left|\psi_1^{ -}\right\rangle_{AB}$ which are distinguishable by \cite{Walgate2000}. The measurement $\mathcal{M}_A^2$ isolates the states $\left|\psi_2^{ +}\right\rangle_{AB}$ and $\left|\psi_2^{ -}\right\rangle_{AB}$, which are again distinguishable using the criteria from \cite{Walgate2000}.

    Thus, the set $\mathcal{S}_3$ is initially perfectly distinguishable by LOCC and, being complete, forms a full orthonormal basis of the underlying Hilbert space. We now show that the set $\mathcal{S}_3$ exhibits activation of nonlocality.
	
    Consider that Bob performs a local measurement $\mathcal{N}_B \equiv\left\{\mathcal{N}^B_1:=P\left[(|\mathbf{0}\rangle,|\mathbf{1}\rangle,|\mathbf{2}\rangle)_B\right], \mathcal{N}^B_2:=\right.$ $\left.P\left[(|\mathbf{3}\rangle,|\mathbf{4}\rangle,|\mathbf{5}\rangle)_B\right]\right\}$, $P\left[(|i\rangle,|j\rangle,\dots)_{\mathcal{P}}\right] = \left[(|i\rangle\langle i|+|j\rangle\langle j|+\dots)_{\mathcal{P}}\right]$, $\mathcal{P}$ stands for party. If $\mathcal{N}^B_1$ clicks, they end up with
		
		$$
		\left\{\begin{array}{c}
			|\mathbf{0-3}\rangle_A|\mathbf{0}\pm\mathbf{1}\rangle_B,|\mathbf{2-5}\rangle_A|\mathbf{1}\pm\mathbf{2}\rangle_B, \\
			|\mathbf{1\pm2+4\pm5}\rangle_A|\mathbf{0}\rangle_B,|\mathbf{0\pm1+3\pm4}\rangle_A|\mathbf{2}\rangle_B, \\
			|\mathbf{1-4}\rangle_A|\mathbf{1}\rangle_B\\
			
		\end{array}\right\}
		$$

After that Alice makes measurement  $\mathcal{N}_A \equiv\left\{\mathcal{N}^A_1:=P\left[(|\mathbf{0}\rangle,|\mathbf{1}\rangle,|\mathbf{2}\rangle)_A\right], \mathcal{N}^A_2:=\right.$ $\left.P\left[(|\mathbf{3}\rangle,|\mathbf{4}\rangle,|\mathbf{5}\rangle)_A\right]\right\}$. If $\mathcal{N}^A_1$ occurs, the states reduce to
		\[
		\left\{\begin{array}{c}
			|\mathbf{0}\rangle_A|\mathbf{0}\pm\mathbf{1}\rangle_B,|\mathbf{2}\rangle_A|\mathbf{1}\pm\mathbf{2}\rangle_B, \\
			|\mathbf{1\pm2}\rangle_A|\mathbf{0}\rangle_B,|\mathbf{0\pm1}\rangle_A|\mathbf{2}\rangle_B, \\
			|\mathbf{1}\rangle_A|\mathbf{1}\rangle_B
		\end{array}\right\}
		\]
		which is a strictly indistinguishable set \cite{BennettPB1999, Walgate2002}. If $\mathcal{N}^A_2$ occurs, the resulting set is also strictly indistinguishable \cite{BennettPB1999, Walgate2002}.
		\[ 
		\left\{\begin{array}{c}
			|\mathbf{3}\rangle_A|\mathbf{0}\pm\mathbf{1}\rangle_B,|\mathbf{5}\rangle_A|\mathbf{1}\pm\mathbf{2}\rangle_B, \\
			|\mathbf{4\pm5}\rangle_A|\mathbf{0}\rangle_B,|\mathbf{3\pm4}\rangle_A|\mathbf{2}\rangle_B, \\
			|\mathbf{4}\rangle_A|\mathbf{1}\rangle_B
		\end{array}\right\}
		\]

On the other hand, if Bob gets $\mathcal{N}^B_2$, they are then left with the following states
		\[	\left\{\begin{array}{c}
			|\mathbf{0-3}\rangle_A|\mathbf{4}\pm\mathbf{5}\rangle_B,|\mathbf{5}\rangle_A|\mathbf{5}\pm\mathbf{3}\rangle_B, \\
			|\mathbf{1\pm2+4\pm5}\rangle_A|\mathbf{4}\rangle_B,|\mathbf{0\pm1+3\pm4}\rangle_A|\mathbf{3}\rangle_B, \\
			|\mathbf{1-4}\rangle_A|\mathbf{5}\rangle_B\\

		\end{array}\right\}
		\]
		After that Alice makes measurement $\mathcal{N}^A_1$ or $\mathcal{N}^A_2$.  If $\mathcal{N}^A_1$ clicks, it gives
		\[ 
		\left\{\begin{array}{c}
			|\mathbf{0}\rangle_A|\mathbf{4}\pm\mathbf{5}\rangle_B,|\mathbf{2}\rangle_A|\mathbf{5}\pm\mathbf{3}\rangle_B, \\
			|\mathbf{1\pm2}\rangle_A|\mathbf{4}\rangle_B,|\mathbf{0\pm1}\rangle_A|\mathbf{3}\rangle_B, \\
			|\mathbf{1}\rangle_A|\mathbf{5}\rangle_B
		\end{array}\right\}
		\]
		which is a strictly indistinguishable set \cite{BennettPB1999, Walgate2002}. If $\mathcal{N}^A_2$ clicks, it also gives a strictly indistinguishable set \cite{BennettPB1999, Walgate2002}.
		\[ 
		\left\{\begin{array}{c}
			|\mathbf{3}\rangle_A|\mathbf{4}\pm\mathbf{5}\rangle_B,|\mathbf{5}\rangle_A|\mathbf{5}\pm\mathbf{3}\rangle_B, \\
			|\mathbf{4\pm5}\rangle_A|\mathbf{4}\rangle_B,|\mathbf{3\pm4}\rangle_A|\mathbf{3}\rangle_B, \\
			|\mathbf{4}\rangle_A|\mathbf{5}\rangle_B
		\end{array}\right\}.
		\]

As illustrated by the previous example, the given set displays the activation of nonlocality. In accordance with Bennett et al. \cite{BennettPB1999}, the four sets of nine states constitute a complete basis within the Hilbert space $\mathbb{C}^3 \otimes \mathbb{C}^3$, which explicitly proves their completability.

  Hence, the set $\mathcal{S}_3$ provides an example of activation of nonlocality (purple region) that lies strictly outside the class of sets exhibiting activation of incompletability (yellow region); see Fig.~\ref{fig1}.
   To obtain a deeper understanding of the activation of incompletability, we now consider a more restricted class of local operations, namely local incoherent operations and classical communication (LICC). Since LICC forms a proper subclass of LOCC, it provides a natural framework for investigating the interplay between incompletability and quantum coherence. In the next section, we derive several fundamental results concerning activation of incompletability within the LICC paradigm.
   
	\section{Distinguishability by LICC}
    \label{A4}
A resource theory is defined by two fundamental ingredients: a set of free states and a class of free operations \cite{c2, c3}. Free states are those that contain no resource and can be prepared at no cost under the physical restrictions of the theory. Free operations are physical processes that, by definition, cannot generate a resource state from a free state.

In the resource theory of quantum coherence \cite{T2014,W2016,S2017}, the free states are the incoherent states, i.e., density matrices that are diagonal in a fixed reference basis. Correspondingly, free operations are those completely positive trace-preserving (CPTP) maps that cannot create coherence from incoherent states.

 A CPTP map $\Lambda(\rho) = \sum_n K_n \rho K_n^\dagger$ is called an incoherent operation \cite{c1} if it admits a Kraus decomposition 
 $\{K_n\}$ such that each Kraus operator maps the set of incoherent states 
 $\mathcal{I}$ into itself. \[K_n \mathcal{I} K_n^\dagger \subset \mathcal{I} \quad \forall n\]Equivalently, this means that for each $n$, the operator $K_n$ must satisfy:\[\frac{K_n \rho K_n^\dagger}{\text{Tr}(K_n \rho K_n^\dagger)} \in \mathcal{I}, \quad \forall \rho \in \mathcal{I}\]
When moving to distributed scenarios involving spatially separated parties, additional constraints arise. If each party is restricted to performing only local incoherent operations and they are allowed to communicate solely through classical channels, the resulting class of operations is known as LICC \cite{c1}. Since incoherent operations are more restrictive than general quantum operations, we naturally obtain
\[\text{LICC} \subset \text{LOCC}\]
This operational restriction provides the bridge to entanglement. Under LOCC, entanglement is the resource that cannot be created. Under LICC, coherence is additionally constrained at the local level. Remarkably, these two resource theories become deeply connected. To proceed further, we first introduce several basic concepts and definitions that will be used in the remainder of this work.
	We will require the concepts of von Neumann entropy and relative entropy between quantum states. The von Neumann entropy of a quantum state $\varrho$ is given by \cite{quantum entanglement}
	\begin{equation*}
		S(\varrho)=-\operatorname{tr}\left(\varrho \log _2 \varrho\right) 
	\end{equation*}
	The von Neumann relative entropy between two quantum states, $\varrho$ and $\zeta$ is given by \cite{quantum entanglement}
	\begin{equation*}
		S(\varrho \| \zeta)=\operatorname{tr}\left(\varrho \log _2 \varrho-\varrho \log _2 \zeta\right) 
	\end{equation*}
	It is to be noted that the relative entropy is not symmetric with respect to its arguments.
	A qualitative definition of quantum coherence, as has already been given in the literature, can be as follows.
	\begin{defin}
	  A pure quantum state $|\psi\rangle$ of a physical system represented by a Hilbert space $\mathbb{C}^d$ is said to be quantum coherent with respect to a reference orthonormal basis say $\left\{\ket{i}\right\}_{i=0}^{d-1}$ if it cannot be written as a single basic vector $\ket{i}$.
	The notion has also been quantified, and one of the quantifications is as follows.
	\end{defin}
	\begin{defin} \cite{T2014}
	 Let $B$ be a complete orthonormal basis of pure states in $\mathbb{C}^d$. Let $C_B(|\psi\rangle)$ be the relative entropy of quantum coherence of $|\psi\rangle \in \mathbb{C}^d$, so
	
	\begin{equation*}
		C_B(|\psi\rangle)=\min _{\rho_B \in M_s} S\left(|\psi\rangle\langle\psi| \| \rho_B\right), \tag{3}
	\end{equation*}
	where $M_B$ is the set of all probabilistic mixtures of the projectors onto the elements of $B$.
	\end{defin}
	Complete orthonormal bases of bipartite quantum systems are, of course, distinguishable under global operations. One just makes a measurement onto that basis. Things are more complicated, however, when a restricted class of operations is allowed, for example, LOCC. If a complete orthonormal basis is also distinguishable under LOCC, we will call the basis locally distinguishable.
	
	A bipartite pure state is said to be entangled if it cannot be written as a tensor product of pure states of the two systems. Interestingly, the following result establishes a direct connection between quantum coherence and entanglement in the context of locally distinguishable bases.
	\begin{prop} \cite{A2021}
		Let $|\psi\rangle \in \mathcal{H}_A \otimes \mathcal{H}_B$ be a bipartite pure state. Then $|\psi\rangle$ is entangled if and only if, for every complete orthonormal basis of $\mathcal{H}_A \otimes \mathcal{H}_B$ that is perfectly distinguishable by LOCC, $|\psi\rangle$ has nonzero quantum coherence with respect to that basis.
	\end{prop}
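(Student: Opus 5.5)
The plan is to prove the two directions of the proposition separately, using only the definition of coherence with respect to a basis and the basic structure of LOCC-distinguishable bases.

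\emph{($\Leftarrow$, contrapositive.)} Suppose $|\psi\rangle$ is a product state, $|\psi\rangle=|a\rangle|b\rangle$. We exhibit one LOCC-distinguishable basis in which $|\psi\rangle$ has zero coherence.

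1. Extend $|a\rangle$ to an orthonormal basis $\{|a_i\rangle\}$ of $\mathcal{H}_A$ with $|a_0\rangle=|a\rangle$.
2. Extend $|b\rangle$ to an orthonormal basis $\{|b_j\rangle\}$ of $\mathcal{H}_B$ with $|b_0\rangle=|b\rangle$.
3. The product basis $\{|a_i\rangle|b_j\rangle\}$ is perfectly distinguishable by LOCC: Alice measures in $\{|a_i\rangle\}$, Bob measures in $\{|b_j\rangle\}$, and they compare results.
4. $|\psi\rangle$ is itself an element of this basis. The dephased state $\rho=|\psi\rangle\langle\psi|\in M_B$ then gives $S(|\psi\rangle\langle\psi|\,\|\,\rho)=0$, so $C_B(|\psi\rangle)=0$.

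Hence, if $|\psi\rangle$ has nonzero coherence in every LOCC-distinguishable basis, it cannot be product, i.e. it is entangled.

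\emph{($\Rightarrow$).} Let $|\psi\rangle$ be entangled and let $B=\{|e_k\rangle\}$ be any LOCC-distinguishable complete orthonormal basis.

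1. By the Horodecki et al. result cited in the introduction, every basis containing an entangled state is locally indistinguishable. Therefore every $|e_k\rangle$ is a product state.
2. For a pure state, the relative entropy of coherence equals $S(\Delta_B(|\psi\rangle\langle\psi|))$, where $\Delta_B$ is dephasing in $B$. This vanishes iff the distribution $|\langle e_k|\psi\rangle|^2$ is concentrated on a single $k$.
3. If that were the case, $|\psi\rangle$ would equal some $|e_k\rangle$ up to a phase, and hence would be product. This contradicts entanglement, so $C_B(|\psi\rangle)>0$.

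\emph{Main obstacle.} The delicate point is justifying, in the multi-round LOCC setting, that a perfectly LOCC-distinguishable complete basis consists only of product states. If the Horodecki citation is not accepted as given, I would prove it directly by contradiction, in three steps.

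1. Perfect LOCC discrimination of a complete basis yields a separable POVM $\{E_k\}$ with $E_k$ satisfying $\langle e_l|E_k|e_l\rangle=\delta_{kl}$.
2. Completeness forces $\sum_k E_k=\mathbb{I}$. Each $E_k$ is positive and has zero expectation on every $|e_l\rangle$ with $l\neq k$, hence is supported on $|e_k\rangle$. So $E_k=|e_k\rangle\langle e_k|$.
3. A separable rank-one positive operator must be a product projector. This gives the claim; it uses only separability of LOCC, not the full LOCC structure.

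Step 3 is where care is needed. I would handle it by writing $E_k=\sum_m A_m\otimes B_m$ with positive terms and noting that each term must be proportional to the rank-one $E_k$.
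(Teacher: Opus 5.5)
Your proof is correct and follows the paper's argument. The converse direction extends $|a\rangle$ and $|b\rangle$ to a local product basis in which $|\psi\rangle$ is itself a basis element. The forward direction combines the Horodecki et al.\ result that an LOCC-distinguishable complete basis cannot contain an entangled state with the observation that zero coherence forces a pure state to coincide, up to phase, with a basis vector. Your optional self-contained proof of the Horodecki lemma is also sound, and it makes explicit what the paper takes from the citation: perfect discrimination of a complete basis forces each separable POVM element to equal $|e_k\rangle\langle e_k|$, and a rank-one separable operator must be a product projector.
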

\begin{proof}
	 Let $|\psi\rangle$ be a pure entangled state of a bipartite quantum system, the Hilbert space corresponding to which is $\mathbb{C}^{d_1} \otimes \mathbb{C}^{d_2}$. Let us now consider an arbitrary locally distinguishable complete orthonormal basis. $|\psi\rangle$ will have vanishing quantum coherence in this basis if and only if it is an element of this basis. However, if $|\psi\rangle$ is an element of this basis, the latter cannot be locally distinguishable, as was proven in Ref. \cite{Horodecki2003} that any complete orthonormal basis containing even a single entangled state cannot be locally distinguishable. Therefore, $|\psi\rangle$ must have a nonzero quantum coherence in any locally distinguishable complete orthonormal basis.
	
	On the other hand, an arbitrary pure product state can always be expanded to form a complete biorthonormal product basis, which can always be distinguished by LOCC. Therefore, the product state has zero quantum coherence, at least with respect to this basis.
\end{proof}

	So, entangled quantum states can be interpreted as manifestations of quantum coherence when examined in locally distinguishable bases. In particular, a multipartite pure state is entangled across at least one bipartition if and only if it remains quantum coherent with respect to every locally distinguishable complete orthonormal basis.

For general bipartite states, possibly mixed, this connection becomes quantitative. The minimal relative entropy of coherence, optimized over all locally distinguishable bases, is lower bounded by the relative entropy of entanglement. Therefore, entanglement emerges as a constrained form of quantum coherence - specifically, the portion of coherence that persists under all locally distinguishable classical descriptions.

We now investigate the role of LICC in distinguishing between activation of nonlocality and activation of incompletability. In particular, LICC imposes fundamental structural constraints on sets whose incompletability can be activated within this restricted operational framework. The following proposition addresses this question.
\begin{theorem}
	Let $\mathcal{S} \subset \mathcal{H}$ be a set of mutually orthogonal product states that is perfectly distinguishable via LOCC, and suppose that $\mathcal{S}$ exhibits incompletability activable by LICC. Then there exists a completion $\widetilde{\mathcal{S}}$ of $\mathcal{S}$ to an orthonormal basis of $\mathcal{H}$ which is not perfectly distinguishable by LOCC.
\end{theorem}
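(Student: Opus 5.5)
The plan is to build the completion $\widetilde{\mathcal{S}}$ so that it contains an entangled vector, and then invoke the result of Horodecki \emph{et al.}~\cite{Horodecki2003}, as already used in the proof of Proposition 3: any complete orthonormal basis containing even one entangled state cannot be perfectly distinguished by LOCC. So the whole argument reduces to showing that the orthogonal complement $\mathcal{S}^\perp$ is nonzero and admits an orthonormal basis with at least one entangled member. Any orthonormal basis of $\mathcal{S}^\perp$ appended to $\mathcal{S}$ is already a completion to an orthonormal (not necessarily product) basis of $\mathcal{H}$, so only this entangled vector has to be found.

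The first step uses the activation hypothesis. There is an LICC branch, with local Kraus operators $K^A_\lambda\otimes K^B_\lambda$ (or the multipartite analogue), that maps $\mathcal{S}$ to a strictly incompletable set $\mathcal{S}_\lambda$. In particular $\mathcal{S}_\lambda^\perp\neq 0$ inside the support subspace $\mathcal{H}_\lambda=\mathrm{ran}(K^A_\lambda)\otimes\mathrm{ran}(K^B_\lambda)$. Strict incompletability (in the UPB sense of Theorem 1) also says $\mathcal{S}_\lambda^\perp\cap\mathcal{H}_\lambda$ contains no product vector. Since the Kraus operators are incoherent, they map computational basis vectors to multiples of basis vectors. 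For the projective measurements used in Proposition 2 they are simply coordinate projectors $P_\lambda=P^A_\lambda\otimes P^B_\lambda$, and I will restrict to that case or reduce to it. Then $\mathcal{H}_\lambda$ is a subspace of $\mathcal{H}$ and $\langle\psi_i|P_\lambda|\chi\rangle=\langle\psi_i|\chi\rangle$ for $|\chi\rangle\in\mathcal{H}_\lambda$. So any $|\chi\rangle\in\mathcal{H}_\lambda$ orthogonal to every $P_\lambda|\psi_i\rangle$ is orthogonal to every $|\psi_i\rangle$, which gives $\mathcal{S}_\lambda^\perp\cap\mathcal{H}_\lambda\subseteq\mathcal{S}^\perp$. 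Choosing a nonzero $|\chi\rangle$ there, it is automatically entangled, because it is a vector of $\mathcal{H}_\lambda$ with no product form. We then extend $\{|\chi\rangle\}$ to an orthonormal basis of $\mathcal{S}^\perp$ by Gram--Schmidt and let $\widetilde{\mathcal{S}}=\mathcal{S}\cup\{\text{that basis}\}$.

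The last step applies~\cite{Horodecki2003}: $\widetilde{\mathcal{S}}$ contains the entangled state $|\chi\rangle$, so it is not LOCC distinguishable, which proves the theorem. The observation that $\mathcal{S}$ itself, being LOCC distinguishable, also has a product completion is only a remark showing the two completions differ.

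The main obstacle is the pullback step $\mathcal{S}_\lambda^\perp\cap\mathcal{H}_\lambda\subseteq\mathcal{S}^\perp$. It is exactly where incoherence is used, and it is immediate for coordinate projectors. For general incoherent Kraus operators $K=\sum_i c_i|f(i)\rangle\langle i|$ the adjoint is not a projector, so I would instead pull back via $|\chi'\rangle=(K^A_\lambda\otimes K^B_\lambda)^\dagger|\chi\rangle$. This gives $\langle\psi_i|\chi'\rangle=\langle K\psi_i|\chi\rangle=0$, so $|\chi'\rangle\in\mathcal{S}^\perp$. The remaining check is that $|\chi'\rangle\neq0$ and stays entangled: local operators preserve product form, so I would argue that $K^\dagger$ restricted to $\mathrm{ran}(K)$ is injective, in which case Schmidt rank cannot drop. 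Verifying this injectivity from the incoherent structure, where $K$ permutes and rescales basis vectors, is the point I expect to need the most care.
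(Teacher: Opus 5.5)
Your proof is correct, and it takes a different route from the paper. The paper argues by contradiction through coherence. Assuming $\mathcal{S}$ has a locally distinguishable completion, the product states of $\mathcal{S}$ have zero coherence in that basis. The coherence--entanglement criterion (Proposition 3) is then used to give the states of $\mathcal{S}_\lambda$ nonzero coherence in every locally distinguishable basis. This is declared incompatible with LICC being unable to raise average coherence.

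You instead construct $\widetilde{\mathcal{S}}$ directly. You pull an entangled vector of $\mathcal{S}_\lambda^\perp\cap\mathcal{H}_\lambda$ back into $\mathcal{S}^\perp$ and apply the result of Horodecki \emph{et al.}. This uses only linear algebra and exhibits the completion explicitly. The paper's route aims at a stronger conclusion, namely that no completion is locally distinguishable, but its key steps are fragile:
\begin{itemize}
\item The states of $\mathcal{S}_\lambda$ are local images of product states, hence product. Proposition 3 therefore gives each of them zero coherence in some locally distinguishable basis, not nonzero coherence in every one.
\item LICC operations are incoherent only with respect to the fixed computational basis, and coherence monotonicity holds only on average over outcomes. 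So monotonicity on a selected branch, relative to the optimized locally distinguishable bases, is not justified.
\end{itemize}

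The step you flagged as the main obstacle is immediate and uses no incoherence. For any linear map, $\ker K^\dagger=(\operatorname{ran}K)^\perp$, so $K^{A\dagger}_\lambda\otimes K^{B\dagger}_\lambda$ is injective on $\operatorname{ran}K^A_\lambda\otimes\operatorname{ran}K^B_\lambda$. Applying it to a Schmidt decomposition of $|\chi\rangle$, whose Schmidt vectors lie in these ranges, preserves Schmidt rank. The orthogonality $\langle\psi_i|\chi'\rangle=0$ also holds for states annihilated on branch $\lambda$. Hence your argument works for any LOCC branch, and the LICC hypothesis is not needed for this conclusion.

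Two points are worth stating explicitly. First, you are right to read strict incompletability relative to the support $\mathcal{H}_\lambda$. Relative to all of $\mathcal{H}$, the complement of $\mathcal{S}_\lambda$ always contains product vectors once a Kraus operator is rank-deficient, as in Proposition 2. Second, if only plain incompletability were assumed, you would need one more fact. A subspace containing no entangled vector has the form $|a\rangle\otimes V$ or $V\otimes|b\rangle$, and so has an orthonormal product basis. Therefore $\mathcal{S}_\lambda^\perp\cap\mathcal{H}_\lambda$ would still contain an entangled vector, and your pullback goes through unchanged.
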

\begin{proof}
	Assume that the initial set $\mathcal{S}=\{\ket{\psi_i}\}$ consists of mutually orthogonal product states that are perfectly distinguishable by LOCC and is also completable to a locally distinguishable orthonormal basis.
	
	By Proposition~4, a bipartite pure state is entangled if and only if it exhibits nonzero quantum coherence with respect to every locally distinguishable complete orthonormal basis \cite{A2021}. Since each $\ket{\psi_i}$ is a product state and $\mathcal{S}$ is distinguishable, there exists a locally distinguishable basis with respect to which all $\ket{\psi_i}$ have zero coherence. Hence, the average coherence (and average entanglement) of the set $\mathcal{S}$ is zero.
	
	Now consider an LICC protocol applied to $\mathcal{S}$. By the assumption of incompletability activation, there exists an outcome $\lambda$ (occurring with nonzero probability $p_\lambda > 0$) such that the post-measurement set
	\[
	\mathcal{S}_\lambda = \left\{ \frac{M_\lambda^A \otimes M_\lambda^B \ket{\psi_i}}{\sqrt{p_\lambda}} \right\}
	\]
	is strictly incompletable.
	
	Strict incompletability implies that there exists no complete orthonormal basis, perfectly distinguishable by LOCC, that contains $\mathcal{S}_\lambda$. Consequently, by Proposition 4, each state in $\mathcal{S}_\lambda$ must exhibit nonzero coherence with respect to every locally distinguishable basis. Therefore, the average coherence of the post-measurement set $\mathcal{S}_\lambda$ is strictly positive.
	This leads to a contradiction, since LICC (being a class of local incoherent operations assisted by classical communication) cannot increase the average coherence of a set of states. 
	
	Hence, the assumption is false, and the result follows.
\end{proof}
	\section{Conclusion}
\label{A5}
In this work, we have introduced the notion of \emph{activation of incompletability} in the context of local quantum state discrimination. We showed that there exist sets of orthogonal states that are initially perfectly distinguishable by LOCC and free from local redundancy, yet can be transformed through LOCC into strictly incompletable sets. This demonstrates that incompletability, much like local indistinguishability, can emerge as an activated nonclassical feature under local measurements.

We established that activation of incompletability constitutes a strictly stronger phenomenon than activation of nonlocality: every set that exhibits activation of incompletability necessarily exhibits activation of nonlocality, while the converse does not hold in general. This reveals a nontrivial hierarchy within activation phenomena associated with local state discrimination. 

Furthermore, by analyzing the problem within the framework of LICC, we uncovered a fundamental connection between incompletability, coherence, and entanglement. In particular, we proved that any set whose incompletability can be activated under LICC can be completed to a full orthonormal basis, although the resulting completed basis ceases to be perfectly distinguishable by LOCC. Our results, therefore, highlight an intrinsic interplay between coherence-theoretic restrictions and locally accessible quantum information.

These findings provide a new operational perspective on incompletability and suggest several open directions, including the characterization of incompletability activation in multipartite systems, its relation to other resource-theoretic notions of nonclassicality, and the identification of minimal-dimensional examples exhibiting stronger activation structures.

\section{ACKNOWLEDGMENT}
A. Bhunia, acknowledges the support of ANRF-NPDF (File no. PDF/2025/002762).

\end{document}